\documentclass[11 pt]{article}
\usepackage{amsmath}
\usepackage{amsthm}
\usepackage{amssymb}
\usepackage{graphicx}
\usepackage{hyperref}
\usepackage{color}
\usepackage{float}
\usepackage{comment}

\usepackage{ulem}
\usepackage{enumerate}

\newtheorem{theorem}{Theorem}[section]
\newtheorem{definition}[theorem]{Definition}
\newtheorem{assumption}[theorem]{Assumption}
\newtheorem{lemma}[theorem]{Lemma}

\newtheorem{remark}[theorem]{Remark}
\numberwithin{equation}{section}

\newcommand{\R}{\mathbb{R}}

\newcommand{\Var}{\text{Var}}

\author{ 
	Shantanu Awasthi\footnote{Independent Scholar, Email: awasthi.shantanu@gmail.com}, 
	\quad Minglian Lin\footnote{Department of Mathematical Sciences, The University of Texas at El Paso, Email: mlin2@utep.edu}, 
	\quad Blair Faber\footnote{Independent Scholar, Email: Blairfaber2@gmail.com}, 
	\quad Michael Roberts\footnote{Independent Scholar, Email: mjroberts543@gmail.com},
	\quad Hassan Butt\footnote{Plaster School of Business, Missouri Southern State University, Email: Butt-H@mssu.edu}
	}
\date{}

\begin{document}
\title{\textsc{Enhancing the Black-Scholes Model for Option Valuation via Lévy Processes and Malliavin Calculus}}

\maketitle
\begin{abstract}
	The Black-Scholes model has been extensively used for option pricing, but exhibits limitations in its reliance on geometric Brownian motion and fixed volatility assumptions. This paper proposes an enhanced model incorporating stochastic volatility with jumps modeled by a Lévy process. Leveraging multidimensional Itô calculus, we derive a pricing formula for European call options under the new framework. Additionally, Malliavin calculus enables the derivation of an exact expression for at-the-money implied volatility. The proposed model is shown to better capture empirical features like volatility smiles. Analysis of VIX data demonstrates the model's ability to match observed market volatility. The integration of Lévy processes and Malliavin calculus represents a valuable advancement in addressing deficiencies in the classic Black-Scholes model. Further empirical testing is warranted to validate the approach across varying market conditions and option types.
\end{abstract}

\noindent \textsc{Key Words:} L\'evy process, Jump term, Implied volatility, Malliavin calculus, Stock price. \\

\noindent \textsc{AMS subject classifications:} 91G10, 93E20, 60G51. 

\section{Introduction}
The Black-Scholes-Merton (BSM) model has long been employed by financial markets to price European options, despite its inherent imperfections and limitations. By utilizing inputs such as implied volatility, time, underlying asset price, strike price, and risk-free rate, the BSM model provides a pricing framework for both call and put options \cite{Humay}. It remains a prevalent choice for financial institutions and corporations in risk modeling and option valuation, particularly in the context of Employee Stock Options.

Option pricing has a significant influence on quantitative finance. In the BSM model,
the option value depends on future volatility of the stock rather than its expected return \cite{NV}.
One drawback of the classical BSM model is the mismatch between model-implied volatility and
market-observed implied volatility \cite{SenGupta}. 
The BSM model builds upon the foundation of Geometric Brownian Motion, which assumes continuous changes in asset prices over infinitesimal time intervals, where randomness governs the alteration of asset state. However, this model encounters key limitations, particularly in the usage of implied volatility. Historical volatility or implied volatility of similarly priced assets are employed to determine implied volatility, leading to potential inaccuracies and mispricing \cite{Carr}. Additionally, the model assumes constant volatility throughout the option's lifespan, disregarding the inherent variability of asset volatility during trading days. Notably, option prices often exhibit a volatility smile or skew, with higher implied volatility for out-of-the-money options and lower implied volatility for in-the-money options. This market phenomenon indicates an expectation of higher volatility for extreme price movements, which the BSM model fails to account for.

Furthermore, the BSM model relies on the assumption that stock price follows a log-normal distribution, which aligns with the expected returns of the stock distributed normally for an instant in time. However, this assumption overlooks the considerable variations in stock price distributions that deviate from normality, as previously discussed \cite{ShanIn}. Moreover, the model's applicability is limited to European options, unable to accommodate contracts with early exercise, such as American options. This limitation arises from the model's inability to determine the present value of a strike price when the time to expiration is uncertain.

This research paper aims to address the limitations associated with implied volatility in the BSM model by introducing a novel approach that incorporates a Lévy subordinator. By considering volatility as a stochastic process with nested jump terms, the proposed framework provides a more accurate depiction of market dynamics. Leveraging the principles of Itô's lemma for multidimensional Lévy processes and utilizing suitable approximations, we derive an expression for the price of a European call option. Additionally, this study explores the application of Malliavin calculus\cite{Applebaum}, a powerful mathematical tool, to obtain an exact expression for the implied volatility at-the-money (ATM) when the underlying asset adheres to the proposed model.

The paper's structure unfolds as follows:  In Section 2, we introduce the Lévy subordinator and demonstrate its incorporation into the option pricing framework, utilizing Itô's lemma and appropriate approximations. This section also  presents the main theorem, establishing the expression for the price of a European call option under the proposed model, accompanied by a thorough proof elucidating the derivations and approximations involved.
Section 3 focuses on the ATM implied volatility and investigates its exact expression within the proposed framework using Malliavin calculus. By exploiting the properties of Malliavin calculus, we provide an in-depth analysis of the relationship between implied volatility and the dynamics of the underlying asset.
Section 4 presents the data analysis using VIX data and strike price of 240, selection of strike price is arbitrary. Plot is this section shows the comparison between actual and computed implied volatility.



\section{Option Pricing}

Motivated by the empirical evidence that jumps in stock
returns co-occur with jumps in volatility, we introduce a two dimensional jump diffusion model in which
the log-price and a volatility proxy share common jump shocks.

\subsection{Notations}

\subsubsection{Noise and pre-jump evaluation}

Let $(\Omega,\mathcal F,(\mathcal F_t)_{t\in[0,T]},\mathbb Q)$ be a filtered probability space supporting
\begin{enumerate}[(i)]
\item a one-dimensional Brownian motion $(W_t)_{t\in[0,T]}$,
\item a Poisson random measure $N(dt,dz)$ on $[0,T]\times([-n,n]\setminus\{0\})$ with L\'evy measure $\nu(dz)$,
independent of $W$,
\end{enumerate}
and let $\tilde N(dt,dz):=N(dt,dz)-\nu(dz)\,dt$ be the compensated Poisson random measure.

\begin{definition}[Left limits, $s-$, and jumps]\label{def:cadlag_notation}
For a c\`adl\`ag process $X$, define the left limit (pre-jump value)
\[
X_{s-}:=\lim_{u\uparrow s}X_u,
\]
and the jump size
\[
\Delta X_s:=X_s-X_{s-}.
\]
The notation $s-$ means ``just before time $s$,'' i.e.\ evaluation at $X_{s-}$.
In particular, $X_s^{-}$ is synonymous with $X_{s-}$.
\end{definition}


\subsubsection{The coefficients $A_{11},A_{12},A_{21},A_{22}$}

Let $X_t=(X_t^1,X_t^2)$ be an $\R^2$ semimartingale whose continuous martingale part is driven by a Brownian motion
with diffusion vector (in our case) $\Sigma_t\in\R^{2\times 1}$. Then the continuous quadratic covariations satisfy
\begin{equation}\label{eq:cov_cont}
	d[X^i,X^j]^c_t = (\Sigma_t\Sigma_t^\top)_{ij}\,dt,\qquad i,j\in\{1,2\},
\end{equation}
and the second-order continuous term in It\^o's formula is
\begin{equation}\label{eq:Aij_general}
	\frac12\sum_{i,j=1}^2 (\Sigma_t\Sigma_t^\top)_{ij}\,\partial_{x_ix_j}f(t,X_{t-})\,dt
	=\sum_{i,j=1}^2 A_{ij}(t)\,\partial_{x_ix_j}f(t,X_{t-})\,dt,
\end{equation}
where
\begin{equation}\label{eq:Aij_def}
	A_{ij}(t):=\frac12(\Sigma_t\Sigma_t^\top)_{ij}.
\end{equation}
Thus $A_{ij}$ are generally time-dependent processes, not constants.

In our model,
since only $X^1$ has a Brownian term in \eqref{eq:X1_adapted} and $X^2$ has no Brownian term,
\[
\Sigma_t=\begin{pmatrix}\sigma_t\\ 0\end{pmatrix}
\quad\Rightarrow\quad
\Sigma_t\Sigma_t^\top=
\begin{pmatrix}
	\sigma_t^2 & 0\\
	0 & 0
\end{pmatrix},
\]
so
\begin{equation}\label{eq:Aij_model}
	A_{11}(t)=\tfrac12\sigma_t^2,\qquad A_{12}(t)=A_{21}(t)=0,\qquad A_{22}(t)=0.
\end{equation}


\subsubsection{It\^o formula for a two-dimensional L\'evy semimartingale}

Consider a two-dimensional semimartingale of the form
\begin{equation}\label{eq:general_2d}
	dX_t=b_t\,dt+\Sigma_t\,dW_t+\int_{-n}^{n}\gamma(t,X_{t-},z)\,\tilde N(dt,dz),
\end{equation}
where $b_t\in\R^2$ and $\gamma(\cdot)\in\R^2$.
Let $f\in C^{1,2}([0,T]\times\R^2)$. Then the It\^o-L\'evy formula (multidimensional) yields
\begin{equation}\label{eq:ito_2d}
	\begin{aligned}
		df(t,X_t)
		={}&\partial_t f(t,X_{t-})
		+\nabla f(t,X_{t-})^\top b_t
		+\tfrac12\mathrm{tr}\big((\Sigma_t\Sigma_t^\top)\nabla^2 f(t,X_{t-})\big)\\
		&
		+\int_{-n}^{n}\Big(f(t,X_{t-}+\gamma(t,X_{t-},z))-f(t,X_{t-})
		-\nabla f(t,X_{t-})^\top\gamma(t,X_{t-},z)\Big)\nu(dz)dt\\
		&+\nabla f(t,X_{t-})^\top\Sigma_t\,dW_t\\
		&+\int_{-n}^{n}\Big(f(t,X_{t-}+\gamma(t,X_{t-},z))-f(t,X_{t-})\Big)\tilde N(dt,dz).
	\end{aligned}
\end{equation}

\subsection{Adapted model specification under the risk-neutral measure}

\begin{assumption}\label{ass:bounded_jumps}
	The jump sizes are bounded: there exists $n>0$ such that all jumps are supported on $[-n,n]\setminus\{0\}$.
	Moreover, the jump integrands satisfy the standard square-integrability conditions with respect to
	$\nu(dz)\,dt$ so that all stochastic integrals below are well-defined.
\end{assumption}

Under the risk-neutral probability $\mathbb Q$, define the stock price by $S_t:=e^{X_t^1}$ where the
log-price $X^1$ follows the \textit{adapted} semimartingale SDE
\begin{equation}\label{eq:X1_adapted}
dX_t^1=\Big(r-\tfrac12\sigma_t^2-\kappa_t\Big)\,dt+\sigma_t\,dW_t+\int_{-n}^{n}\eta(t,z)\,\tilde N(dt,dz),
\qquad X_0^1\in\mathbb R,
\end{equation}
where $r$ is the constant interest rate, $\sigma_t$ is an adapted square-integrable process,
and $\eta(t,z)$ is the (log-)jump amplitude. The drift correction
\begin{equation}\label{eq:kappa_def}
\kappa_t := \int_{-n}^{n}\Big(e^{\eta(t,z)}-1-\eta(t,z)\Big)\,\nu(dz)
\end{equation}
ensures that the discounted price $e^{-rt}S_t$ is a $\mathbb Q$-martingale (see Lemma~\ref{lem:mart} below).
This resolves the missing/unclear compensator adjustment in the jump-diffusion drift.

\begin{lemma}[Risk-neutral martingale condition]\label{lem:mart}
Assume 
$$\int_{-n}^{n}\big(e^{\eta(t,z)}-1-\eta(t,z)\big)\nu(dz)<\infty, \quad \forall t\in[0,T],$$
then $e^{-rt}S_t$ is a $\mathbb Q$-martingale.
\end{lemma}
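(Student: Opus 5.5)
The plan is to apply the one-dimensional specialisation of the It\^o--L\'evy formula \eqref{eq:ito_2d} to $f(t,x)=e^{-rt+x}$ evaluated at $x=X^1_t$. This shows that $Y_t:=e^{-rt}S_t$ satisfies an SDE with no $dt$ term, so $Y$ is a local martingale. The remaining work is to upgrade ``local martingale'' to ``true martingale''.

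\textbf{Step 1: the SDE for $Y$.} We have $\partial_t f=-f$, $\partial_x f=\partial_{xx}f=f$, and the jump integrand is $\gamma=\eta(t,z)$. Substituting the drift $b_t=r-\tfrac12\sigma_t^2-\kappa_t$, the $dt$ coefficient becomes
\[
Y_{t-}\Big(-r+r-\tfrac12\sigma_t^2-\kappa_t+\tfrac12\sigma_t^2+\int_{-n}^{n}\big(e^{\eta(t,z)}-1-\eta(t,z)\big)\nu(dz)\Big).
\]
By the definition \eqref{eq:kappa_def} of $\kappa_t$, this vanishes. The finiteness hypothesis of Lemma~\ref{lem:mart} is exactly what makes the cancellation legitimate: both integrals are finite, so no $\infty-\infty$ occurs. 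What remains is
\[
dY_t=Y_{t-}\sigma_t\,dW_t+\int_{-n}^{n}Y_{t-}\big(e^{\eta(t,z)}-1\big)\tilde N(dt,dz).
\]
Hence $Y$ is the Dol\'eans-Dade exponential of the local martingale
\[
M_t=\int_0^t\sigma_s\,dW_s+\int_0^t\!\int_{-n}^{n}\big(e^{\eta(s,z)}-1\big)\tilde N(ds,dz).
\]
Under Assumption~\ref{ass:bounded_jumps}, both stochastic integrals are well defined, so $Y$ is a local martingale. It is also positive, so by Fatou's lemma it is a supermartingale.

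\textbf{Step 2: true martingale property (main obstacle).} A positive supermartingale is a martingale iff $\mathbb E_{\mathbb Q}[Y_T]=Y_0$, and the lemma as stated does not control $\sigma$ or $\eta$ enough to guarantee this. My first attempt will therefore make the integrability implicit in Assumption~\ref{ass:bounded_jumps} explicit. I will interpret it to include a Novikov--type condition for exponentials with jumps, for instance Lepingle--M\'emin:
\[
\mathbb E\exp\Big(\tfrac12\int_0^T\sigma_s^2ds+\int_0^T\!\int_{-n}^n\big((1+\phi)\log(1+\phi)-\phi\big)\nu(dz)ds\Big)<\infty,\qquad \phi=e^{\eta}-1 .
\]
This condition holds in particular when $\sigma$ and $\eta$ are bounded, which I will assume if needed. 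Under boundedness the argument is elementary. One shows $\sup_{t\le T}\mathbb E[Y_t^2]<\infty$ by applying It\^o to $Y^2$ and using Gronwall, with $\int(e^{\eta}-1)^2\nu(dz)$ bounded. Then the integrands $Y_{t-}\sigma_t$ and $Y_{t-}(e^{\eta}-1)$ are in $L^2(dt\times d\mathbb Q)$ and $L^2(\nu\,dt\times d\mathbb Q)$ respectively. Consequently both stochastic integrals are genuine $L^2$ martingales, which gives $\mathbb E[Y_t\mid\mathcal F_s]=Y_s$.

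In the general case I will instead cite the Lepingle--M\'emin criterion directly and state the additional integrability as part of the hypotheses.
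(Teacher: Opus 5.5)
Your proof is correct and takes the paper's route. You apply the It\^o--L\'evy formula to $e^{-rt+X^1_t}$, observe that $\kappa_t$ exactly cancels the jump compensator $\int(e^{\eta}-1-\eta)\,\nu(dz)$, and conclude that $e^{-rt}S_t$ is driftless and hence a local martingale. The only difference is the last step: the paper just says ``integrability yields a martingale,'' whereas you rightly note that the lemma's hypothesis cannot supply this and make the extra condition explicit (L\'epingle--M\'emin, or boundedness of $\sigma$ and $\eta$). Your caution is justified: with $\eta\equiv 0$ and $\sigma_t=1/R_t$, where $dR_t=-dW_t+R_t^{-1}\,dt$, $R_0=1$, is a three-dimensional Bessel process, one gets $e^{-rt}S_t=S_0/R_t$, a strict local martingale, so yours is a more careful version of the same proof.
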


\begin{proof}
Applying the exponential It\^o formula for jump semimartingales to $S_t=e^{X_t^1}$ with dynamics
\eqref{eq:X1_adapted} yields
\[
\frac{dS_t}{S_{t-}} = r\,dt + \sigma_t\,dW_t + \int_{-n}^{n}\big(e^{\eta(t,z)}-1\big)\,\tilde N(dt,dz),
\]
because the compensator term $\kappa_t$ in \eqref{eq:kappa_def} cancels the drift contribution of jumps.
Therefore, $d(e^{-rt}S_t)$ has zero drift and is a local martingale; integrability yields a martingale.
\end{proof}

\paragraph{Volatility proxy with common jumps}
Let $Y_t$ be a positive adapted variance factor (or variance proxy) driven by the same jump measure $N$:
\begin{equation}\label{eq:Y_adapted}
dY_t = a(b-Y_t)\,dt + \int_{-n}^{n}\xi(t,z)\,N(dt,dz),\qquad Y_0>0,
\end{equation}
where $a>0$, $b>0$ and $\xi(t,z)\ge 0$ is the jump size mapping for $Y$. This specification induces
co-jumps in return and volatility through the common $N$.

We define the (scaled) volatility proxy $X_t^2$ in the same functional form,
but now with adapted $Y_t$:
\begin{equation}\label{eq:X2_def_paper}
X_t^2 := \sqrt{\frac{Y_t}{2n\,(T+\delta-t)}},\qquad \delta>0.
\end{equation}
Since $Y_t$ is adapted and $t\mapsto (T+\delta-t)^{-1}$ is deterministic, $X^2$ is adapted.


\subsection{Pricing representation}

Let $g(x_1)=(e^{x_1}-K)^+$ be the European call payoff and define the option price
\[
V_t:=\mathbb E_t^{\mathbb Q}\Big[e^{-r(T-t)}g(X_T^1)\Big].
\]
Let $u(t,x_1,x_2)$ denote the Black-Scholes call price evaluated at state $(x_1,x_2)$:
\begin{equation}\label{eq:u_def}
u(t,x_1,x_2):=\mathrm{BS}(t,T,e^{x_1},K,r,x_2),
\qquad u(T,x_1,x_2)=g(x_1).
\end{equation}          

\begin{theorem}\label{thm:2.2}
Assume the adapted model \eqref{eq:X1_adapted} - \eqref{eq:Y_adapted} and define $X_t=(X_t^1,X_t^2)$ by
\eqref{eq:X2_def_paper}. Then
\begin{equation}\label{eq:repr_main}
V_t
=
u(t,X_t)
+\mathbb E_t^{\mathbb Q}\!\left[\int_t^T e^{-r(s-t)}\,(\mathcal C_s u)\big(s,X_{s-}\big)\,ds\right],
\end{equation}
where the correction operator $\mathcal C_s$ equals
\begin{align}\label{eq:C_operator}
(\mathcal C_s u)(s,x)
=
& \underbrace{\frac12\Big(\sigma_s^2-x_2^2\Big)\Big(\partial_{x_1x_1}u-\partial_{x_1}u\Big)(s,x)}_{\text{variance mismatch}}
\;+\;
\underbrace{\frac{x_2}{2(T+\delta-s)}\,\partial_{x_2}u(s,x)}_{\text{time-scaling drift}} \nonumber \\
& \;+\;
\underbrace{\int_{-n}^{n}\mathcal J_u(s,x,z)\,\nu(dz)}_{\text{jump compensator}},
\end{align}
with jump compensator integrand
\begin{equation}\label{eq:J_def}
\mathcal J_u(s,x,z)
:=
u\big(s,x+\gamma(s,x,z)\big)-u(s,x)-\nabla_x u(s,x)\cdot\gamma(s,x,z),
\end{equation}
and jump amplitude vector (for $X=(X^1,X^2)$)
\begin{equation}\label{eq:gamma_model}
\gamma(s,x,z)
:=
\left(
\eta(s,z),\;
\sqrt{\frac{x_2^2\,2n(T+\delta-s)+\xi(s,z)}{2n(T+\delta-s)}}-x_2
\right).
\end{equation}
\end{theorem}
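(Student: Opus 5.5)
The proof will follow the standard ``Black--Scholes-plus-correction'' argument. Apply the It\^o--L\'evy formula \eqref{eq:ito_2d} to the discounted process $e^{-r(s-t)}u(s,X_s)$ on $[t,T]$, using the PDE that $u$ satisfies to cancel most of the drift. Then take conditional expectations, so that the remaining drift appears as the correction operator $\mathcal C_s$. The terminal condition $u(T,x_1,x_2)=g(x_1)$ in \eqref{eq:u_def} gives $e^{-r(T-t)}u(T,X_T)=e^{-r(T-t)}g(X_T^1)$, whose conditional expectation is $V_t$. Hence
\[
V_t-u(t,X_t)=\mathbb E_t^{\mathbb Q}\Big[\int_t^T d\big(e^{-r(s-t)}u(s,X_s)\big)\Big],
\]
and everything reduces to identifying the $ds$-part of this differential.

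The first step is to write $X=(X^1,X^2)$ in the form \eqref{eq:general_2d}. For $X^1$ this is immediate from \eqref{eq:X1_adapted}: the drift is $r-\tfrac12\sigma_s^2-\kappa_s$, the diffusion is $\sigma_s$, and the compensated jump is $\eta(s,z)$.

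For $X^2=\sqrt{Y/(2n(T+\delta-s))}$ I will use the one-dimensional It\^o formula with $Y$ from \eqref{eq:Y_adapted}.
\begin{itemize}
\item Between jumps, $X^2$ has drift $\frac{a(b-Y_s)}{2\sqrt{Y_s\,2n(T+\delta-s)}}+\frac{X^2_s}{2(T+\delta-s)}$. The second summand comes from differentiating the deterministic factor $(T+\delta-s)^{-1/2}$.
\item At a jump $z$, $Y$ moves from $Y_{s-}=2n(T+\delta-s)(X^2_{s-})^2$ to $Y_{s-}+\xi(s,z)$. Hence $\Delta X^2_s$ is exactly the second component of $\gamma$ in \eqref{eq:gamma_model}.
\item Since \eqref{eq:Y_adapted} is written with the uncompensated $N$, I will rewrite $N=\tilde N+\nu\,dz\,ds$. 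This moves a term $\int\gamma_2\,\nu(dz)$ into the drift.
\end{itemize}
This matches \eqref{eq:Aij_model}: only $A_{11}=\tfrac12\sigma_s^2$ is nonzero.

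The second step is to use the Black--Scholes PDE satisfied by $u$ with constant volatility $x_2$:
\[
\partial_t u+r\partial_{x_1}u+\tfrac12x_2^2(\partial_{x_1x_1}u-\partial_{x_1}u)-ru=0.
\]
This holds for each fixed $x_2$, since $x_2$ enters only as a parameter.

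Substituting into the drift of $d(e^{-r(s-t)}u)$ obtained from \eqref{eq:ito_2d} gives the following terms.
\begin{itemize}
\item The $x_1$-diffusion terms give $\tfrac12(\sigma_s^2-x_2^2)(\partial_{x_1x_1}u-\partial_{x_1}u)$, which is the variance mismatch.
\item The deterministic time-rescaling part of the $X^2$ drift gives $\frac{x_2}{2(T+\delta-s)}\partial_{x_2}u$.
\item The integral term in \eqref{eq:ito_2d} gives $\int\mathcal J_u\,\nu(dz)$ with $\gamma$ as in \eqref{eq:gamma_model}.
\item The drift pieces $-\kappa_s\partial_{x_1}u$ and the compensator of $\gamma_2$ must be reconciled with the $-\nabla u\cdot\gamma$ term inside $\mathcal J_u$. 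Here I will check that $\kappa_s$ from \eqref{eq:kappa_def} and the conversion $N\to\tilde N$ produce exactly the terms absorbed in the stated form of $\mathcal C_s$.
\end{itemize}

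The third step is to show that the $dW$ and $\tilde N$ stochastic integrals are true martingales rather than merely local martingales, so that their conditional expectations vanish. I will use Assumption \ref{ass:bounded_jumps}, boundedness of $\partial_{x_1}u$ (since $0\le\partial_{S}\mathrm{BS}\le1$), the Lipschitz bound of $u$ in $x_1$, and square-integrability of $\sigma$. If needed I will localize with stopping times and pass to the limit by dominated convergence.

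I expect the main obstacle to be the bookkeeping for the $X^2$ drift. The mean-reversion term $a(b-Y)$ and the $N$ versus $\tilde N$ compensator for $\xi$ do not visibly appear in \eqref{eq:C_operator}, so I must verify carefully how they are absorbed. I will first try to check whether the intended convention treats the $\xi$-jumps entirely through $\mathcal J_u$ with the first-order compensation built in. If a residual term $\partial_{x_2}u\cdot\big(\text{mean-reversion}+\int\gamma_2\nu\big)$ survives, I will need to state it explicitly or show it is assumed away.

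A secondary difficulty is the singularity of $u$ at $s=T$ and as $x_2\to0$. I will handle this by working on $[t,T-\epsilon]$ and letting $\epsilon\to0$.
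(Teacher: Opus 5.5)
\textbf{Verdict: there is a genuine gap, and the statement as printed cannot be proved.} Your route is the paper's route: apply the It\^o--L\'evy formula \eqref{eq:ito_2d} to $e^{-rs}u(s,X_s)$, use the Black--Scholes PDE in the frozen parameter $x_2$, and take conditional expectations to remove the $dW$ and $\tilde N$ integrals. Your computation of the $X^2$ dynamics is correct.

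The gap is the step you deferred: checking that $-\kappa_s\partial_{x_1}u$, the mean-reversion drift of $X^2$, and the $N\to\tilde N$ compensator are absorbed into $\mathcal C_s$. That check fails, as you suspected it might. Since \eqref{eq:ito_2d} is written with $\tilde N$, its $\nu(dz)\,ds$ term is already exactly $\int\mathcal J_u\,\nu(dz)$, including the first-order correction $-\nabla_x u\cdot\gamma$. So $\mathcal J_u$ has no room to absorb anything else. After the PDE cancellation, the drift of $e^{-rs}u(s,X_s)$ is $e^{-rs}\big[(\mathcal C_su)+R_s\big](s,X_{s-})$, where $\gamma_2$ is the second component of \eqref{eq:gamma_model} and
\[
R_s(x)=-\kappa_s\,\partial_{x_1}u(s,x)+\Big(\frac{a\big(b-2n(T+\delta-s)x_2^2\big)}{4n(T+\delta-s)\,x_2}+\int_{-n}^{n}\gamma_2(s,x,z)\,\nu(dz)\Big)\partial_{x_2}u(s,x).
\]
Nothing in the hypotheses makes $R_s$ vanish. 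We have $\kappa_s\ge0$ while $\partial_{x_1}u=e^{x_1}\Phi(d_+)>0$, and $\gamma_2\ge0$ (as $\xi\ge0$) while the vega $\partial_{x_2}u$ is positive.

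A concrete counterexample: take $\eta\equiv0$, $\xi\equiv0$, $\sigma$ constant and $Y_t<b$. Then $\gamma\equiv0$ and $X^2$ is deterministic, and the exact It\^o identity reads
\[
V_t=u(t,X_t)+\mathbb E_t^{\mathbb Q}\int_t^T e^{-r(s-t)}\Big[(\mathcal C_su)(s,X_s)+\frac{a(b-Y_s)}{2\sqrt{2n(T+\delta-s)Y_s}}\,\partial_{x_2}u(s,X_s)\Big]\,ds .
\]
The last term is strictly positive because $Y_s<b$ on $[t,T]$, so \eqref{eq:repr_main} as printed fails.

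The paper's proof does not supply a missing idea here. It never writes out $b_s$. It asserts that the $(t,x_1)$-terms cancel, but they leave $-\kappa_s\partial_{x_1}u$, since the drift of $X^1$ is $r-\frac12\sigma_s^2-\kappa_s$. It also keeps only the time-scaling part of the $X^2$ drift.

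Carried out honestly, your plan proves \eqref{eq:repr_main} with $\mathcal C_s$ replaced by $\mathcal C_s+R_s$. When $\int|\eta|\,\nu(dz)<\infty$, this corrected operator is more naturally the sum of:
\begin{enumerate}[(i)]
\item the variance-mismatch term;
\item the full vega drift $\big(\frac{x_2}{2(T+\delta-s)}+\frac{a(b-2n(T+\delta-s)x_2^2)}{4n(T+\delta-s)x_2}\big)\partial_{x_2}u$;
\item the Merton-type jump term $\int_{-n}^{n}\big(u(s,x+\gamma)-u(s,x)-(e^{\eta(s,z)}-1)\,\partial_{x_1}u(s,x)\big)\,\nu(dz)$.
\end{enumerate}
That corrected statement is what you should state and prove.

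Two smaller points on your third step. First, $\partial_{x_1}u=e^{x_1}\Phi(d_+)$ is bounded by $S$, not bounded. So the true-martingale claim for the $dW$-integral needs moments of $S\sigma$, or your localization argument, with the limit justified via $0\le u\le S$ and the martingale property of $e^{-rt}S_t$. Second, $x_2\to0$ is not a danger: nonnegative jumps plus mean reversion give $Y_s\ge\min(Y_t,b)>0$. The only singularity is therefore the integrable $(T-s)^{-1/2}$ blow-up of the Gamma term at $s=T$.
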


\begin{proof}
We present the proof in explicit steps, using the two-dimensional It\^o-L\'evy formula \eqref{eq:ito_2d}.Define
\[
F(s,x):=e^{-rs}u(s,x),\qquad x=(x_1,x_2)\in\R^2.
\]
Then
\begin{equation}\label{eq:F_derivs}
\partial_sF=e^{-rs}(\partial_su-ru),\qquad
\nabla_xF=e^{-rs}\nabla_xu,\qquad
\nabla_x^2F=e^{-rs}\nabla_x^2u.
\end{equation}
From \eqref{eq:X1_adapted} and \eqref{eq:X2_def_paper}, $X$ is c\`adl\`ag with Brownian term only in
the first component. Hence its continuous diffusion vector is
\[
\Sigma_s=\begin{pmatrix}\sigma_s\\0\end{pmatrix},
\quad\text{so that}\quad
A_{11}(s)=\tfrac12\sigma_s^2,\ A_{12}(s)=A_{21}(s)=A_{22}(s)=0
\]
as in \eqref{eq:Aij_model}. The jump of $X^1$ at time $s$ is $\Delta X_s^1=\eta(s,z)$ when $N$ jumps by $z$.
The jump of $Y$ is $\Delta Y_s=\xi(s,z)$, therefore the jump in $X^2=\sqrt{Y/(2n(T+\delta-s))}$ is exactly
the second component in \eqref{eq:gamma_model}. Thus the jump amplitude vector for $X$ is $\gamma(s,X_{s-},z)$.
Applying \eqref{eq:ito_2d} with $f=F$ gives
\[
d\big(F(s,X_s)\big)=\text{(drift)}\,ds+\text{(local martingale)},
\]
where the drift term is
\begin{align}
\label{eq:drift_term}
e^{-rs}\Big[
(\partial_su-ru)(s,X_{s-})
&+\nabla_xu(s,X_{s-})^\top b_s
+\tfrac12\sigma_s^2\,\partial_{x_1x_1}u(s,X_{s-}) \nonumber\\
&+\int_{-n}^{n}\Big(u(s,X_{s-}+\gamma)-u(s,X_{s-})-\nabla_xu(s,X_{s-})^\top\gamma\Big)\nu(dz)
\Big]ds
\end{align}
and the local martingale term is the sum
\begin{align}
\label{eq:lm_term}
e^{-rs}\,\partial_{x_1}u(s,X_{s-})\,\sigma_s\,dW_s
+\;e^{-rs}\!\int_{-n}^{n}\Big(u(s,X_{s-}+\gamma)-u(s,X_{s-})\Big)\tilde N(ds,dz).
\end{align}
Integrating \eqref{eq:drift_term} - \eqref{eq:lm_term} over $[t,T]$ yields
\begin{equation}\label{eq:integrate}
e^{-rT}u(T,X_T)-e^{-rt}u(t,X_t)
=\int_t^T \text{(drift)}\,ds\;+\;\int_t^T \text{(local martingale)}.
\end{equation}
Taking $\mathbb E_t^{\mathbb Q}[\cdot]$ of \eqref{eq:integrate} and using that the conditional expectation
of the local martingale term is $0$ (by standard localization/integrability), we obtain
\begin{equation}\label{eq:after_condexp}
\mathbb E_t^{\mathbb Q}\!\left[e^{-rT}u(T,X_T)\right]
=
e^{-rt}u(t,X_t)
+\mathbb E_t^{\mathbb Q}\!\left[\int_t^T \text{(drift)}\,ds\right].
\end{equation}
Since $u(T,x_1,x_2)=g(x_1)$, multiplying both sides of \eqref{eq:after_condexp} by $e^{rt}$ gives
\[
V_t
=
u(t,X_t)
+\mathbb E_t^{\mathbb Q}\!\left[\int_t^T e^{-r(s-t)}\,\Big((\partial_su+\mathcal L_su-ru)(s,X_{s-})\Big)\,ds\right],
\]
where $\mathcal L_s$ is the generator induced by the dynamics of $X$.
Finally, because $u$ is the Black-Scholes price with volatility input $x_2$,
the Black-Scholes PDE implies that the $(t,x_1)$-terms cancel when $\sigma_s^2$ is replaced by $x_2^2$,
leaving the variance mismatch term
$\frac12(\sigma_s^2-x_2^2)(\partial_{x_1x_1}u-\partial_{x_1}u)$.
The deterministic time scaling in $x_2=\sqrt{Y/(2n(T+\delta-s))}$ contributes the drift term
$\frac{x_2}{2(T+\delta-s)}\partial_{x_2}u$,
and the jump compensator contributes $\int \mathcal J_u\,\nu(dz)$ as in \eqref{eq:J_def}.
This yields \eqref{eq:repr_main} - \eqref{eq:gamma_model}.
\end{proof}

\begin{remark}[What each symbol means in the theorem/proof]
\begin{enumerate}[(i)]
\item $X_{s-}$ is the pre-jump value (left limit), and $s-$ means evaluation at $X_{s-}$.
\item $\Delta X_s=X_s-X_{s-}$ is the jump size.
\item $A_{ij}(s)=\tfrac12(\Sigma_s\Sigma_s^\top)_{ij}$; in this model $A_{11}=\tfrac12\sigma_s^2$ and
$A_{12}=A_{21}=A_{22}=0$.
\end{enumerate}
\end{remark}

\section{Implied Volatility}

In this section, we study at-the-money implied volatility of a European call when stock price is driven by both a Brownian motion and a L\'evy process. 

For simplicity, we assume that the interest rate is zero.
At time $t$, we consider the log-price $X$ of a stock  under a risk neutral probability measure $\mathbb{Q}$:
\begin{align}
\label{se11}
X_t = X_0 -\frac{1}{2} \int_{0}^{t} \sigma_s^2 \,ds + \int_{0}^{t} \sigma_s \,dW_s + \int_{0}^{t} \int_{\mathbb{R}_0} \sigma^2(s,z) \tilde{N}(ds,dz),\quad t \in[0,T],                           
\end{align}
where $T$ is terminal time, $X_0$ is current log-price, 
$\sigma_s$ is a square integrable and right continuous stochastic process adapted to the filtration generated by standard Brownian motion $W$ defined on a complete probability space $(\Omega,\mathcal F,(\mathcal F_t)_{t\in[0,T]},\mathbb Q)$, 
$\mathbb{R}_0 = \mathbb{R}\setminus \{0\}$,
$z$ is generic jump size, 
$\sigma^2(s,z)$ is a Skorohod integrable stochastic process, 
and $\tilde{N}$ is a compensated Poisson process.
We represent the future average volatility $v$ by
\begin{align*}
	v_t = \sqrt{\frac{1}{T-t}\int_{t}^{T} \int_{\mathbb{R}_0} \sigma^{2} (s,z) \tilde{N}(ds,dz)}.
\end{align*}
In the case of zero interest rate, the price of an European call option $V_t$ with strike price $K$ is
\begin{align*}
    V_t = E_t\big[(e^{X_T}-K)^{+}\big],
\end{align*}
where $E_t[\cdot] = E[\cdot|\mathcal{F}_{t}]$ with respect to risk neutral probability $\mathbb{Q}$. 
Under the  Black-Scholes model, the well known formula of the price of European call option is 
\begin{align*}
	BS(t,T,x,k,\sigma) = e^x \Phi(d_+(k,\sigma)) - e^k \Phi(d_-(k,\sigma)),    
\end{align*}
where $x = X_0$, $k=\ln K$, $\sigma$ is constant volatility, $\Phi$ is the cumulative distribution function of standard normal distribution $N(0,1)$, and 
\begin{align*}
	d_\pm(k,\sigma) = \frac{k_t^*-k}{\sigma\sqrt{T-t}}\pm \frac{\sigma\sqrt{T-t}}{2},
\end{align*}
where $k_t^*$ is the at-the-money strike. If the interest rate is zero, $ k_t^* = x$.
Referring to \cite{Eliya}, we define the inverse function $BS^{-1}(t,T,x,k,\cdot)$ as
\begin{align} \label{Inverse}
	BS\big(t,T,x,k,BS^{-1}(t,T,x,k,\lambda)\big) = \lambda, \quad \lambda > 0.
\end{align}
For fixed $t$, $T$, $X_t$, and $k$, we define the implied volatility $I(t,T,X_t,k)$ as
\begin{align}\label{Implied}
	BS\big(t,T,X_t,k,I(t,T,X_t,k)\big) = V_t.
\end{align}
Comparing \eqref{Inverse} with \eqref{Implied},
\begin{align}\label{I}
	I(t,T,X_t,k) =  BS^{-1}(t,T,X_t,k,V_t).
\end{align}
In the follow-up, we let $\mathbb{D}^{1,2}$ be the domain of Malliavin derivative operator $D$.
As $\mathbb{D}^{1,2}$ is a dense subset of $L^2(\Omega)$ and $D$ is a closed and unbounded operator, we denote $\mathbb{L}^{1,2} = L^{2}([0,T]; \mathbb{D}^{1,2})$.

\begin{theorem}\label{se12}
If the following hypotheses hold,
\begin{enumerate}[(1)]
	\item There exist positive constants $a$ and $b$ such that $a\leq \sigma_t \leq b$, $\forall t \in[0,T]$.
	\item $\sigma^2 \in \mathbb{L}^{1,2} $.
\end{enumerate}
then, in the case of uncorrelated stocks, the model \eqref{se11} results the following at-the-money implied volatility
\begin{align*}
	& I(t,T,X_t,k_t^*) = \\
	& E_t[v_t] \\
	& - E_t \Bigg[
	\frac{1}{32(T-t)}\int_{t}^{T} 
	\frac{BS^{-1}(k_t^{*},\Lambda_s)}{\big(\Phi^{'}\big(d_+(k_t^{*}, BS^{-1}(k_t^{*},\Lambda_s))\big)\big)^2}
	\bigg(E_s\bigg[\frac{\Phi^{'}(d_+(k_t^{*},v_t))}{v_t} \int_{s}^{T} D_s  \sigma_r^2 dr\bigg]\bigg)^2 ds \\
	& \qquad \
	+ \int_{t}^{T} \int_{\mathbb{R}_0} \bigg\{ BS^{-1}(k_t^{*}, \Lambda_s + U_{s,z}) - BS^{-1}(k_t^{*}, \Lambda_s) \\
	& \qquad \
	- \frac{1}{2(T-t)\Phi^{'}\big(d_+(k_t^{*}, BS^{-1}(k_t^{*},\Lambda_s))\big)} 
	E_s\bigg[\frac{\Phi^{'}(d_+(k_t^{*},v_t))}{v_t} \int_{s}^{T} D_{s,z}  \sigma^2(r,z) dr\bigg] \bigg\} \nu(dz)ds \\
	& \qquad \
	+ \int_{t}^{T} \int_{\mathbb{R}_0} \bigg\{ BS^{-1}(k_t^{*}, \Lambda_s + U_{s,z}) - BS^{-1}(k_t^{*}, \Lambda_s) \bigg\} \tilde{N}(ds,dz)
	\Bigg],
\end{align*}
where
\begin{align*}
	\Lambda_s &= E_s\big[BS(t,T,X_t,k_t^{*},v_t)\big],\\
	U_{s,z}
	&= E_s\bigg[e^{X_t} \frac{\Phi^{'}(d_+(k_t^{*},v_t))}{2v_t\sqrt{T-t}} \int_{s}^{T} D_{s,z}  \sigma^2(r,z) dr\bigg].
\end{align*}

\end{theorem}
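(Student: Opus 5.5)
Following the approach of Alòs and coauthors for ATM implied volatility, but with an added jump component, I would write $I(t,T,X_t,k_t^*)=BS^{-1}(k_t^*,V_t)$ by \eqref{I}. The aim is to compare $V_t$ with $\Lambda_t=E_t[BS(t,T,X_t,k_t^*,v_t)]$ and then push the comparison through $BS^{-1}$.

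\textbf{Step 1: express $E_t[v_t]$ through $BS^{-1}$.} Set $\Lambda_s=E_s[BS(t,T,X_t,k_t^*,v_t)]$ for $s\in[t,T]$. This is a martingale in $s$ with $\Lambda_T=BS(k_t^*,v_t)$, so $BS^{-1}(k_t^*,\Lambda_T)=v_t$. It follows that
\[
E_t[v_t]=BS^{-1}(k_t^*,\Lambda_t)+E_t\Big[BS^{-1}(k_t^*,\Lambda_T)-BS^{-1}(k_t^*,\Lambda_t)\Big].
\]
I also need $I(t,T,X_t,k_t^*)=BS^{-1}(k_t^*,\Lambda_t)$ up to the pricing correction. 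In the uncorrelated case, conditioning on the volatility path gives the Hull--White-type identity $V_t=E_t[BS(t,T,X_t,k_t^*,v_t)]$. This step requires that the jump part is absorbed into the averaged variance $v_t$, which is how the statement defines $v_t$. I would take $V_t=\Lambda_t$ as the working identity and check it against the Theorem \ref{thm:2.2} decomposition. In that decomposition the variance-mismatch term vanishes when the volatility input is $v_t$, since $\partial_{x_1x_1}u-\partial_{x_1}u$ relates to the vega. Granting $V_t=\Lambda_t$, we get $I=E_t[v_t]-E_t[BS^{-1}(\Lambda_T)-BS^{-1}(\Lambda_t)]$, and the theorem reduces to expanding this last expectation.

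\textbf{Step 2: martingale representation of $\Lambda$.} I would apply the Clark--Ocone--Haussmann formula for Lévy functionals (Brownian plus Poisson parts, as in \cite{Applebaum}) to $BS(k_t^*,v_t)$:
\[
d\Lambda_s=E_s[D_sBS(k_t^*,v_t)]\,dW_s+\int_{\mathbb R_0}E_s[D_{s,z}BS(k_t^*,v_t)]\,\tilde N(ds,dz).
\]
The chain rule uses the vega $\partial_\sigma BS=e^{X_t}\Phi'(d_+)\sqrt{T-t}$ together with
\[
D\,v_t=\frac{1}{2v_t(T-t)}\int_s^T D\sigma^2\,dr .
\]
This gives the jump integrand $U_{s,z}$ exactly as stated, and the Brownian integrand $E_s\big[\frac{\Phi'(d_+)}{2v_t\sqrt{T-t}}\int_s^T D_s\sigma_r^2dr\big]$ multiplied by $e^{X_t}$. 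Hypothesis (2) together with the bounds in (1) ensure $v_t$ and these derivatives lie in $\mathbb D^{1,2}$. Hypothesis (1) also keeps $d_\pm$ and $\Phi'$ bounded away from degeneracy.

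\textbf{Step 3: Itô--Lévy formula for $BS^{-1}(k_t^*,\Lambda_s)$.} I would apply the one-dimensional version of \eqref{ito_2d} on $[t,T]$, which produces three pieces:
\begin{enumerate}[(a)]
\item The second-order Brownian term $\frac12\partial^2_\lambda BS^{-1}\,(\text{integrand})^2$.
\item The compensator integral $\int\{BS^{-1}(\Lambda_s+U_{s,z})-BS^{-1}(\Lambda_s)-\partial_\lambda BS^{-1}U_{s,z}\}\nu(dz)ds$.
\item The $\tilde N$ and $dW$ stochastic integrals.
\end{enumerate}
At the money, $\partial_\lambda BS^{-1}=1/\partial_\sigma BS=1/(e^{x}\Phi'(d_+)\sqrt{T-t})$. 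Differentiating once more and using $d_+=\sigma\sqrt{T-t}/2$ at $k=k_t^*$ gives
\[
\partial^2_\lambda BS^{-1}=-\frac{BS^{-1}\,(T-t)/4}{(\partial_\sigma BS)^2}\cdot\frac{1}{\partial_\sigma BS}\cdot(\ldots).
\]
Collecting the constants is where I expect the factor $\frac{1}{32(T-t)}$ and the $(\Phi')^{-2}$ to emerge, with the $e^{X_t}$ factors cancelling. Substituting $U_{s,z}$ into the linear part of (b) yields the stated $\frac{1}{2(T-t)\Phi'}E_s[\cdots]$ term. The $dW$ integral has zero conditional expectation and drops out. The $\tilde N$ integral is kept as in the statement; it is also mean-zero but harmless to retain.

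\textbf{Main obstacle.} The delicate points are twofold. The first is justifying $V_t=\Lambda_t$ with jumps in the price, i.e.\ making the uncorrelated Hull--White reduction rigorous under \eqref{se11}. The second is the bookkeeping of constants in $\partial^2_\lambda BS^{-1}$ at the money. For the first, I would try conditioning on the $\sigma$-field generated by the volatility and jump-variance paths. I would also need localization to guarantee that the stochastic integrals are true martingales. For that I would use the bounds $a\le\sigma\le b$ and the bounded-jump Assumption \ref{ass:bounded_jumps} to control $\Phi'(d_+)^{-1}$ and $1/v_t$.
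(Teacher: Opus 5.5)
Your outline follows the paper's proof step for step. Both use the Hull--White identity $V_t=\Lambda_t$, the add-and-subtract with $BS^{-1}(k_t^*,\Lambda_T)=v_t$, and Clark--Ocone for mixed Brownian/Poisson noise. Both then apply It\^o--L\'evy to $BS^{-1}(k_t^*,\Lambda_s)$, drop the $dW$ integral, and substitute the ATM derivatives of $BS^{-1}$, which the paper imports from Al\`os.

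The one step that fails as written is your second derivative. Your display has a leading minus sign and a spurious factor $\frac{1}{\partial_\sigma BS}(\ldots)$. With $\partial^2_\lambda BS^{-1}<0$, the $\frac{1}{32(T-t)}$ term would come out with the wrong sign. Use $\partial^2_\lambda BS^{-1}=-\partial^2_\sigma BS/(\partial_\sigma BS)^3$ together with $\partial^2_\sigma BS=\partial_\sigma BS\,d_+d_-/\sigma$, where $d_+d_-=-\sigma^2(T-t)/4$ at the money. This gives
\[
\partial^2_\lambda BS^{-1}(k_t^*,\lambda)=\frac{BS^{-1}(k_t^*,\lambda)\,(T-t)}{4(\partial_\sigma BS)^2}=\frac{BS^{-1}(k_t^*,\lambda)}{4e^{2X_t}\Phi'(d_+)^2}>0,
\]
which is \eqref{BS-2}. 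Multiplying half of this by $U_s^2=\frac{e^{2X_t}}{4(T-t)}\big(E_s[\cdots]\big)^2$ yields exactly the $\frac{1}{32(T-t)}$ term.

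On $V_t=\Lambda_t$: the paper does not derive it; it simply invokes the Hull--White formula for the uncorrelated case. You should likewise state it as an input, because neither justification you sketch works.
\begin{itemize}
\item Theorem~\ref{thm:2.2} concerns a different model and volatility state. Its variance-mismatch term does not vanish, since $\partial_{x_1x_1}u-\partial_{x_1}u=\partial_{x_2}u/(x_2(T-s))$ is a positive multiple of vega.
\item Conditioning on the volatility and jump paths in \eqref{se11} gives $BS$ evaluated at the shifted log-price $X_t+\int_t^T\int_{\mathbb{R}_0}\sigma^2(s,z)\tilde N(ds,dz)$ with volatility $\big(\frac{1}{T-t}\int_t^T\sigma_s^2ds\big)^{1/2}$. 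That is not $BS(t,T,X_t,k_t^*,v_t)$ with the $v_t$ of the statement.
\end{itemize}

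Finally, $D_{s,z}$ is a difference operator: $D_{s,z}g(F)=g(F+D_{s,z}F)-g(F)$. The vega chain rule therefore only linearizes $E_s[D_{s,z}BS]$. The paper makes the same identification in \eqref{Usz}, so you agree with it, but that step is not exact.
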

\begin{proof}	
	We prove this theorem with a similar argument as in \cite{Eliya}. In the case of uncorrelated stocks, the Hull-White formula gives that
	\begin{align} \label{H_W}
		V_t = E_t[BS(t,T,X_t,k_t^*,v_t)].
	\end{align}
	We denote $BS^{-1}(k_t^*,V_t) = BS^{-1}(t,T,X_t,k_t^*,V_t)$. Then
\begin{align}
	I(t,T,X_t,k_t^*)
	=\ & BS^{-1}(k_t^*,V_t), \qquad \text{by \eqref{I}} \nonumber\\
	=\ & E_t\big[BS^{-1}\big(k_t^{*},E_t[BS(t,T,X_t,k_t^*,v_t)]\big) \big], \qquad \text{by \eqref{H_W}} \nonumber\\
	=\ & E_t\Big[BS^{-1}\big(k_t^{*},E_t[BS(t,T,X_t,k_t^*,v_t)]\big) 
	- BS^{-1}\big(k_t^{*}, BS(t,T,X_t,k_t^*,v_t)\big) \nonumber\\
	&\ \quad + BS^{-1}\big(k_t^{*}, BS(t,T,X_t,k_t^*,v_t)\big) \Big] \nonumber\\
	=\ & E_t \Big[BS^{-1}\big(k_t^{*},E_t[BS(t,T,X_t,k_t^*,v_t)]\big) 
	- BS^{-1}\big(k_t^{*}, BS(t,T,X_t,k_t^*,v_t)\big) \Big] \nonumber\\
	& + E_t[v_t].	\label{II}
\end{align}
By the Clark-Ocone formula for combined Gaussian and pure jump L\'evy noise (Theorem 12.20 in \cite{Mallcalc}), we have
\begin{align*}
    BS(t,T,X_t,k_t^{*},v_t) = E_t[BS(t,T,X_t,k_t^{*},v_t)] + \int_{t}^{T} U_s dW_s + \int_{t}^{T} \int_{\mathbb{R}_0} U_{s,z} \tilde{N}(ds,dz),
\end{align*}
where
\begin{align}
	& U_s 
	= E_s[D_s BS(t,T,X_t,k_t^*,v_t)]
	= E_s\bigg[e^{X_t} \frac{\Phi^{'}(d_+(k_t^{*},v_t))}{2v_t\sqrt{T-t}} \int_{s}^{T} D_s  \sigma_r^2 dr\bigg],	\label{Us}\\
	& U_{s,z}
	= E_s[D_{s,z} BS(t,T,X_t,k_t^*,v_t)]
	= E_s\bigg[e^{X_t} \frac{\Phi^{'}(d_+(k_t^{*},v_t))}{2v_t\sqrt{T-t}} \int_{s}^{T} D_{s,z}  \sigma^2(r,z) dr\bigg].	\label{Usz}
\end{align}
Denote $\Lambda_s = E_s[BS(t,T,X_t,k_t^{*},v_t)]$. 
By the It\^o formula for the It\^o-L\'evy process (Theorem 9.4 in \cite{Mallcalc}), we obtain
\begin{align*}
	& E_t \Big[BS^{-1}\big(k_t^{*},E_t[BS(t,T,X_t,k_t^*,v_t)]\big) - BS^{-1}\big(k_t^{*}, BS(t,T,X_t,k_t^*,v_t)\big) \Big]\\
	=\ & E_t\big[BS^{-1}(k_t^{*},\Lambda_t) - BS^{-1}(k_t^{*},\Lambda_T)\big] \\
	=\ & - E_t\big[BS^{-1}(k_t^{*},\Lambda_T) - BS^{-1}(k_t^{*},\Lambda_t)\big] \\
	=\ & - E_t \bigg[\int_{t}^{T} \frac{\partial BS^{-1}(k_t^{*}, \Lambda_s)}{\partial \Lambda_s} U_s dW_s + \frac{1}{2}\int_{t}^{T} \frac{\partial^2 BS^{-1}(k_t^{*}, \Lambda_s)}{\partial \Lambda_s^2} U_s^{2} ds \\
	& \qquad \quad
	+ \int_{t}^{T} \int_{\mathbb{R}_0} \Big\{ BS^{-1}(k_t^{*}, \Lambda_s + U_{s,z}) - BS^{-1}(k_t^{*}, \Lambda_s) - \frac{\partial BS^{-1}(k_t^{*}, \Lambda_s)}{\partial \Lambda_s}U_{s,z} \Big\} \nu(dz)ds \\
	& \qquad \quad
	+ \int_{t}^{T} \int_{\mathbb{R}_0} \Big\{ BS^{-1}(k_t^{*}, \Lambda_s + U_{s,z}) - BS^{-1}(k_t^{*}, \Lambda_s) \Big\} \tilde{N}(ds,dz)
	\bigg],
\end{align*}
where $\nu$ is the L\'evy measure. 
The proof of Proposition 3.1 in \cite{Eliya} had already given that
\begin{align*}
	E_t\bigg[\int_{t}^{T} \frac{\partial BS^{-1}(k_t^{*}, \Lambda_s)}{\partial \Lambda_s} U_s dW_s\bigg] = 0.
\end{align*}
Hence,
\begin{align}
	& E_t \Big[BS^{-1}\big(k_t^{*},E_t[BS(t,T,X_t,k_t^*,v_t)]\big) - BS^{-1}\big(k_t^{*}, BS(t,T,X_t,k_t^*,v_t)\big) \Big] \nonumber\\
	=\ & - E_t \bigg[\frac{1}{2}\int_{t}^{T} \frac{\partial^2 BS^{-1}(k_t^{*}, \Lambda_s)}{\partial \Lambda_s^2} U_s^{2} ds \nonumber\\
	& \qquad \quad
	+ \int_{t}^{T} \int_{\mathbb{R}_0} \Big\{ BS^{-1}(k_t^{*}, \Lambda_s + U_{s,z}) - BS^{-1}(k_t^{*}, \Lambda_s) - \frac{\partial BS^{-1}(k_t^{*}, \Lambda_s)}{\partial \Lambda_s}U_{s,z} \Big\} \nu(dz)ds \nonumber\\
	& \qquad \quad
	+ \int_{t}^{T} \int_{\mathbb{R}_0} \Big\{ BS^{-1}(k_t^{*}, \Lambda_s + U_{s,z}) - BS^{-1}(k_t^{*}, \Lambda_s) \Big\} \tilde{N}(ds,dz)
	\bigg]. \label{Et}
\end{align}
In addition, we refer the $8$th line and the equation 3.2 in \cite{Eliya} to get the followings:
\begin{align}
	\frac{\partial BS^{-1}(k_t^{*}, \Lambda_s)}{\partial \Lambda_s}
	& = \frac{1}{e^{X_t}\Phi^{'}\big(d_+(k_t^{*}, BS^{-1}(k_t^{*},\Lambda_s))\big)\sqrt{T-t}},	\label{BS-1}\\
	\frac{\partial^2 BS^{-1}(k_t^{*}, \Lambda_s)}{\partial \Lambda_s^2}
	& = \frac{BS^{-1}(k_t^{*},\Lambda_s)}{4e^{2X_t}\big(\Phi^{'}\big(d_+(k_t^{*}, BS^{-1}(k_t^{*},\Lambda_s))\big)\big)^2}. \label{BS-2}
\end{align}
Finally, we combine the equations \eqref{II}, \eqref{Us}, \eqref{Usz}, \eqref{Et}, \eqref{BS-1}, and \eqref{BS-2} to to complete the proof.
\end{proof}

We now estimate at the money implied volatility under specific framework. We assume that the volatility follows a mean-reverting Ornstein-Uhlenbeck process, i.e. Stein-Stein model.
Under Stein-Stein Model, volatility process assumes following form
\begin{equation}
\label{SS}
    d\sigma_{t} = -\alpha(m-\sigma_{t}) dt + cdW_{t},
\end{equation}
where $\alpha$, $m$, and $c$ are positive real constants and $W$ is a standard Brownian motion.
Here we assume that, $W_{t} = \rho B_{t} + \sqrt{1-\rho^{2}}Z_{t}$, for $\rho \in (-1,1)$ and for two independent standard Brownian motions $Z$  and $B$.
The analytical solution of equation \eqref{SS} is given by
\begin{equation}
    \label{SSS}
    \sigma_{s} = m + (\sigma_{t} - m) e^{-\alpha(s-t) }+ c \int_{t}^{s} e^{-\alpha(s-u)}dW_u.
\end{equation}
In the following theorem, we use Malliavin calculus to derive an exact expression of the at-the-money implied volatility.

\begin{theorem}
\label{IV}
Assuming the volatility process $ \sigma $ in the model \eqref{se11} follows Stein-Stein model, the at-the-money implied volatility $I(t,T,X_t,k_t^{*})$ is given by
\begin{align*}
	& I(t,T,X_t,k_t^*) = \\
	& E_t[v_t] \\
	& - E_t \Bigg[
	\frac{1}{32(T-t)}\int_{t}^{T} 
	\frac{BS^{-1}(k_t^{*},\Lambda_s)}{\big(\Phi^{'}\big(d_+(k_t^{*}, BS^{-1}(k_t^{*},\Lambda_s))\big)\big)^2}
	\bigg(E_s\bigg[\frac{\Phi^{'}(d_+(k_t^{*},v_t))}{v_t} \int_{s}^{T} D_s  \sigma_r^2 dr\bigg]\bigg)^2 ds \\
	& \qquad \
	+ \int_{t}^{T} \int_{\mathbb{R}_0} \bigg\{ BS^{-1}(k_t^{*}, \Lambda_s + U_{s,z}) - BS^{-1}(k_t^{*}, \Lambda_s) \\
	& \qquad \quad
	- \frac{1}{2(T-t)\Phi^{'}\big(d_+(k_t^{*}, BS^{-1}(k_t^{*},\Lambda_s))\big)} 
	E_s\bigg[\frac{\Phi^{'}(d_+(k_t^{*},v_t))}{v_t} \int_{s}^{T} D_{s,z}  \sigma^2(r,z) dr\bigg] \bigg\} \nu(dz)ds \\
	& \qquad \
	+ \int_{t}^{T} \int_{\mathbb{R}_0} \bigg\{ BS^{-1}(k_t^{*}, \Lambda_s + U_{s,z}) - BS^{-1}(k_t^{*}, \Lambda_s) \bigg\} \tilde{N}(ds,dz)
	\Bigg],
\end{align*}
where
\begin{align*}
	\Lambda_s &= E_s\big[BS(t,T,X_t,k_t^{*},v_t)\big],\\
	U_{s,z}
	&= E_s\bigg[e^{X_t} \frac{\Phi^{'}(d_+(k_t^{*},v_t))}{2v_t\sqrt{T-t}} \int_{s}^{T} D_{s,z}  \sigma^2(r,z) dr\bigg].
\end{align*}
\end{theorem}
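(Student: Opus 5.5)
The formula asserted in Theorem~\ref{IV} is literally the formula of Theorem~\ref{se12}. The plan is therefore not to redo the Clark--Ocone/It\^o argument. Instead we verify that, when $\sigma$ follows the Stein--Stein dynamics \eqref{SS}, the two hypotheses of Theorem~\ref{se12} hold (or can be arranged to hold), and then invoke that theorem. Along the way we make explicit the Malliavin derivative $D_s\sigma_r^2$ that enters the Brownian correction term, since this is what makes the expression ``exact'' for this model.

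\emph{Malliavin derivative.} First we compute the Malliavin derivative of the volatility from the explicit solution \eqref{SSS}. Since $\sigma_r = m+(\sigma_t-m)e^{-\alpha(r-t)}+c\int_t^r e^{-\alpha(r-u)}dW_u$ is a Gaussian Wiener functional, for $s\in[t,r]$ we get
\begin{align*}
D_s\sigma_r = c\,e^{-\alpha(r-s)}\mathbf 1_{\{s\le r\}}, \qquad D_s\sigma_r^2 = 2\sigma_r\,c\,e^{-\alpha(r-s)}\mathbf 1_{\{s\le r\}}.
\end{align*}
Hence $\int_s^T D_s\sigma_r^2\,dr = 2c\int_s^T\sigma_r e^{-\alpha(r-s)}dr$. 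This quantity can be substituted into the $U_s$ term \eqref{Us}, and it is square integrable because $\sigma$ has Gaussian moments. The jump derivative $D_{s,z}\sigma^2(r,z)$ is untouched by the Stein--Stein specification, since $\sigma$ is driven only by $W$. Consequently the $\nu(dz)$ and $\tilde N$ terms are carried over verbatim. Hypothesis (2), $\sigma^2\in\mathbb L^{1,2}$, is assumed for the jump coefficient exactly as in Theorem~\ref{se12}.

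\emph{Invoking Theorem~\ref{se12}.} Next we check the remaining structural ingredients used in the proof of Theorem~\ref{se12}: the Hull--White representation \eqref{H_W} and the vanishing of $E_t\big[\int_t^T \partial_\Lambda BS^{-1}\,U_s\,dW_s\big]$. Both rely only on the uncorrelated setting and on integrability of $U_s$, and these follow from the moment bounds above. Then \eqref{II}, \eqref{Et}, \eqref{BS-1} and \eqref{BS-2} combine exactly as before. This yields the displayed formula, with the prefactor $\tfrac{1}{32(T-t)}$ arising from $\tfrac12\cdot\tfrac14\cdot\big(\tfrac{1}{2\sqrt{T-t}}\big)^2$.

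\emph{Main obstacle.} The difficulty is hypothesis (1), the bound $a\le\sigma_t\le b$, which the Gaussian OU process \eqref{SS} violates. The first thing we would try is a localization. Replace $\sigma$ by $\sigma^{(N)}=\psi_N(\sigma)$ with $\psi_N$ smooth, valued in $[1/N,N]$, and equal to the identity on $[2/N,N/2]$. Apply Theorem~\ref{se12} to $\sigma^{(N)}$, whose Malliavin derivative is $\psi_N'(\sigma_r)D_s\sigma_r$. Then pass to the limit $N\to\infty$ using dominated convergence, supported by Gaussian moments of $\sup_r|\sigma_r|$ and continuity of $BS^{-1}$ and $\Phi'(d_+)$ in their arguments. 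The delicate point is that the $\frac{\Phi'(d_+)}{v_t}$ and $BS^{-1}/(\Phi')^2$ factors need uniform control. If that limit cannot be justified, we would instead state the result under the tacit assumption that the parameters keep $\sigma$ away from $0$ and $\infty$ on $[t,T]$, so that the same argument as in Theorem~\ref{se12} applies directly.
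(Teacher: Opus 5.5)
Your route is the paper's: the displayed formula is that of Theorem~\ref{se12} verbatim, and the paper's proof consists only of invoking that theorem and computing $D_s\sigma_r^2$ from \eqref{SSS}.

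Your value $\int_s^T D_s\sigma_r^2\,dr=2c\int_s^T\sigma_r e^{-\alpha(r-s)}\,dr$ is correct, and the paper's is not. The paper pulls $\sigma_r$ outside the $dr$-integral and turns $\rho+\sqrt{1-\rho^2}$ into $\rho$. Differentiating separately along $B$ and $Z$ gives components proportional to $\rho$ and $\sqrt{1-\rho^2}$, whose squares add up to $1$ in $U_s^2$, so no factor $\rho$ belongs there.

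You should also state explicitly that the Brownian motion driving \eqref{SS} is independent of the one driving $X$ in \eqref{se11}. The paper calls both $W$, and if they coincide the Hull--White identity \eqref{H_W} fails. Independence is what ``the uncorrelated setting'' you rely on must mean here.

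The step that remains open is hypothesis (1) of Theorem~\ref{se12}; it is open in your proposal and, silently, in the paper. An OU process with $c>0$ leaves every interval $[a,b]\subset(0,\infty)$ with positive probability. Your fallback of choosing parameters that keep $\sigma$ in such an interval is therefore impossible. Everything rests on the localization, which as sketched needs two repairs.

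\begin{enumerate}[(i)]
\item \emph{Sign of $\sigma$.} $\sigma$ is negative with positive probability, so a monotone $\psi_N$ with values in $[1/N,N]$ converges to $\sigma\vee0$, not to $\sigma$. Truncate $|\sigma|$ instead. This is harmless because only $\sigma^2$ enters, and $D_s\big(\psi_N(|\sigma_r|)^2\big)\to2\sigma_rD_s\sigma_r$.
\item \emph{Integrability.} You attribute integrability to Gaussian moments, and it is needed both for $E_t\big[\int_t^T\partial_\Lambda BS^{-1}\,U_s\,dW_s\big]=0$ and for dominated convergence. It is not automatic. By \eqref{BS-1} and \eqref{BS-2}, the coefficients contain $1/\Phi'(d_+)=\sqrt{2\pi}\exp\big(BS^{-1}(k_t^*,\Lambda_s)^2(T-t)/8\big)$ and its square. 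Controlling these factors is exactly the job of $a\le\sigma\le b$. Without that bound you need exponential-square moment estimates, which for Gaussian quantities hold only for small enough exponents. Some care, possibly a restriction involving $c$, $\alpha$ and $T-t$, is required.
\end{enumerate}

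Until this is supplied, your argument proves the formula only for the truncated volatilities. That is no worse than the paper's proof, but it is not yet a proof of the theorem as stated.
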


\begin{proof}
We have already shown the explicit expression for implied volatility $I(t,T,X_t,k_t^{*})$ in Theorem \ref{se12}. Now using equation \eqref{SSS} for volatility process, we compute Malliavin derivative as
\begin{align*}
    D_s \sigma_r^{2}
    & = 2\sigma_r\ D_s \sigma_r \\
    & = 2\sigma_r\ D_s \bigg( m + (\sigma_{t} - m) e^{-\alpha(r-t) }+ c \int_{t}^{r} e^{-\alpha(r-u)} \Big(\rho dB_{u} + \sqrt{1-\rho^{2}}dZ_{u} \Big) \bigg) \\
    & = 2\sigma_r  c e^{-\alpha(r-s)} \Big(\rho + \sqrt{1-\rho^{2}}\Big).
\end{align*}
Then 
\begin{align*}
	\int_{s}^{T}D_s \sigma_r^{2}dr = 2 \sigma_r c\rho \int_{s}^{T} e^{-\alpha(r-s)}dr = \frac{2\sigma_r c\rho  (e^{-\alpha(T-s)}-1)}{- \alpha}.
\end{align*}

\end{proof}

\section{Data Analysis}
We use the Cboe VIX index, from January 3, 2022 to September 14, 2022, as a market-based proxy for near-term ATM implied volatility. 
	The statistics of the data are summarized in Table \ref{tab:vix_summary_sigma}.
	All calculations are performed by Python 3.14.0.
\begin{table}[H]
	\centering
	\caption{Summary statistics of VIX and volatility proxy $\sigma_t=\mathrm{VIX}_t/100$ (Jan 3, 2022 - Sep 14, 2022).}
	\begin{tabular}{lrrrrr}
		\hline
		Statistic & Mean & Median & Std.\ Dev. & Min & Max \\
		\hline
		VIX (index) & 25.63 & 25.58 & 4.26 & 16.60 & 36.45 \\
		$\sigma_t=\mathrm{VIX}_t/100$ & 0.2563 & 0.2558 & 0.0426 & 0.1660 & 0.3645 \\
		\hline
	\end{tabular}
	\label{tab:vix_summary_sigma}
\end{table}

\paragraph{Stein-Stein (OU) dynamics on volatility.}

On each trading day $t_i$, we define the observed volatility level
\begin{equation}
\sigma_{t_i}^{\mathrm{mkt}} := \frac{\mathrm{VIX}_{t_i}}{100},
\end{equation}
so that $\sigma_t$ denotes volatility; the instantaneous variance is $v_t=\sigma_t^2$.
We model $\sigma_t$ as an Ornstein-Uhlenbeck process
\begin{equation}\label{eq:ou_sigma}
d\sigma_t=\alpha(m-\sigma_t)\,dt+c\,dW_t,\qquad \alpha>0,
\end{equation}
and estimate $(\alpha,m,c)$ from the exact discrete-time transition at step $\Delta$ (daily data: $\Delta=1/252$ years):
\begin{equation}\label{eq:ou_exact}
\sigma_{t_{i+1}} = m + (\sigma_{t_i}-m)e^{-\alpha\Delta} + \varepsilon_i,\qquad
\varepsilon_i\sim \mathcal N\!\left(0,\;\frac{c^2}{2\alpha}\bigl(1-e^{-2\alpha\Delta}\bigr)\right).
\end{equation}
Equivalently, with $b=e^{-\alpha\Delta}$ and $a=m(1-b)$, \eqref{eq:ou_exact} is the AR(1)
\begin{equation}\label{eq:ar1_sigma}
\sigma_{t_{i+1}}=a+b\,\sigma_{t_i}+\varepsilon_i,\qquad 
\Var(\varepsilon_i)=q=\frac{c^2}{2\alpha}(1-b^2).
\end{equation}
We estimate $(a,b)$ by OLS in \eqref{eq:ar1_sigma} (equivalently, Gaussian MLE under \eqref{eq:ou_exact})
and map back to continuous-time parameters via
\begin{equation}\label{eq:mapback_sigma}
\hat\alpha=-\frac{\ln \hat b}{\Delta},\qquad
\hat m=\frac{\hat a}{1-\hat b},\qquad
\hat c=\sqrt{\frac{2\hat\alpha\,\hat q}{1-\hat b^2}},
\end{equation}
where $\hat q$ is the mean squared residual from \eqref{eq:ar1_sigma}.

\begin{table}[H]
\centering
\caption{OU (Stein-Stein) calibration on volatility $\sigma_t$ using the exact discrete-time transition with $\Delta=1/252$.}
\begin{tabular}{lrr}
\hline
Parameter & Estimate & Notes \\
\hline
$\hat\alpha$ & 31.0046 & Mean-reversion speed (per year) \\
$\hat m$ & 0.2609 & Long-run mean of volatility \\
$\hat c$ & 0.3153 & Vol-of-vol (diffusion coefficient) \\
Half-life & 5.6338 & Trading days ($\ln 2/\hat\alpha$ in years) \\
\hline
\end{tabular}
\label{tab:ou_params_sigma}
\end{table}

Table 2 reports the estimated parameters of the Stein-Stein stochastic volatility model calibrated to the VIX-derived volatility series over the period January 3, 2022 to September 14, 2022. The estimated long-run mean volatility is $\hat{m}=0.2609$, which is close to the sample average of the observed volatility proxy reported in Table 1. This indicates that the Ornstein-Uhlenbeck specification is capable of reproducing the average volatility level observed during the sample period.
The estimated mean-reversion parameter $\hat{\alpha}=31.0046$ implies rapid adjustment of volatility toward its long-run equilibrium. In particular, the estimated half-life of volatility shocks is approximately $5.6338$ trading days, suggesting that periods of unusually high or low volatility tend to dissipate quickly. This finding is consistent with the empirical literature documenting the mean-reverting behavior of equity market volatility.
The estimated volatility-of-volatility parameter $\hat{c}=0.3153$ is relatively large, indicating substantial fluctuations in the volatility process. Such behavior is not surprising during the sample period, which coincides with heightened uncertainty caused by persistent inflation concerns, aggressive monetary tightening by the Federal Reserve, and geopolitical tensions associated with the Russia-Ukraine conflict. These events contributed to frequent changes in market expectations and increased volatility clustering.

\begin{figure}[H]
	\begin{minipage}{\textwidth}
	\centering
	\includegraphics[width=0.75\linewidth]{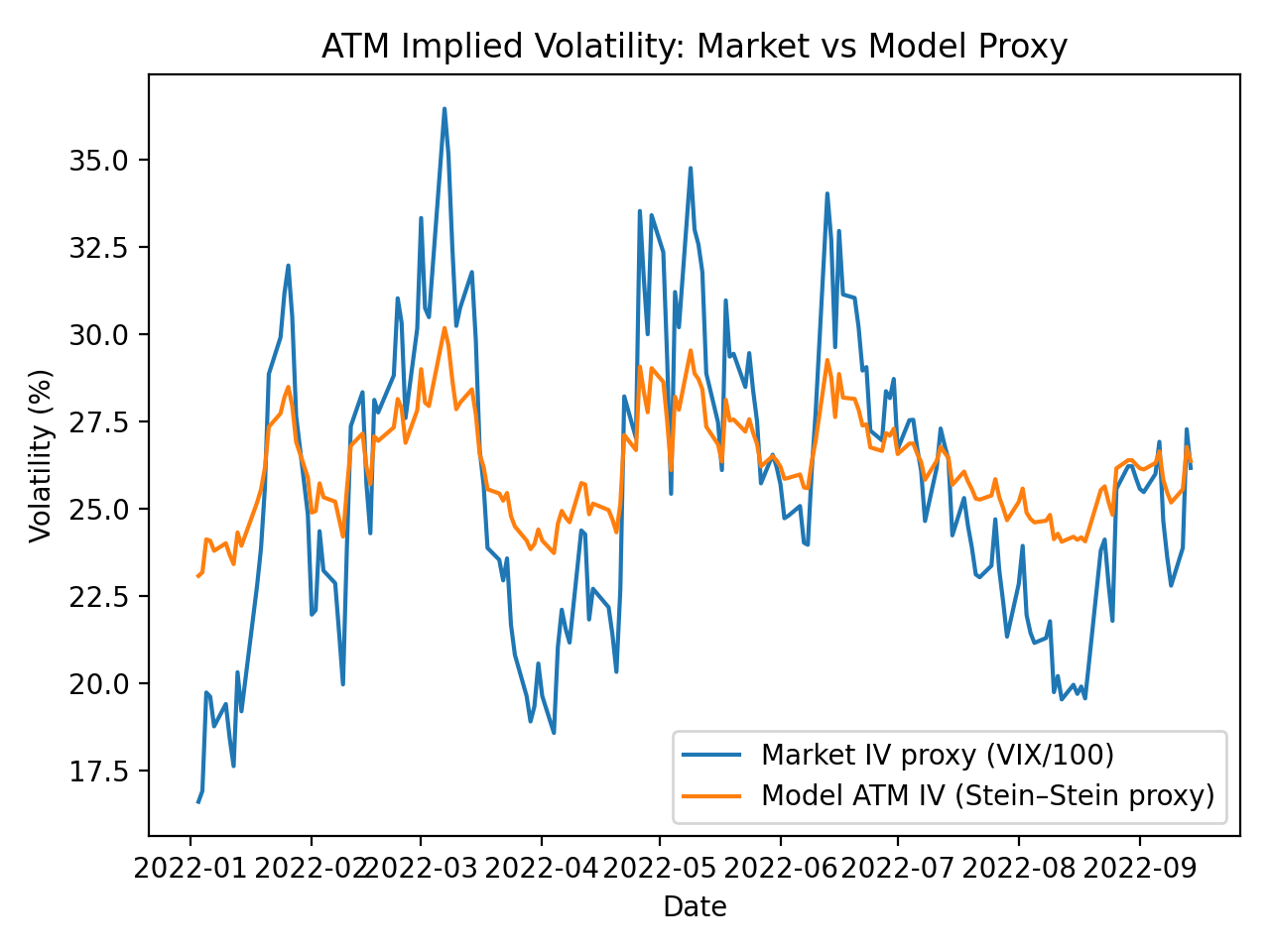}
	\caption{ATM implied volatility proxy. Market proxy is $\sigma_t^{\mathrm{mkt}}=\mathrm{VIX}_t/100$.
		Model proxy $\widehat I_t(\tau)$ uses the Stein-Stein (OU) conditional expected RMS volatility with $\tau=30/365$.}
	\label{fig:iv_market_vs_model}
\end{minipage}
\end{figure}

Figure 1 compares the market-implied volatility proxy,
$\sigma_t^{\mathrm{mkt}}=\mathrm{VIX}_t/100$,
with the model-implied at-the-money volatility generated by the Stein-Stein specification. Several observations can be made from the figure.
First, the model successfully captures the overall level and long-term trend of implied volatility throughout the sample period. In particular, the model reproduces the gradual increase in volatility observed during the first quarter of 2022 and the subsequent stabilization in the later months of the sample.
Second, the model produces a smoother volatility trajectory than the market proxy. While the market-implied volatility exhibits abrupt spikes and short-lived fluctuations, the model-implied volatility evolves more gradually due to the mean-reverting nature of the Ornstein-Uhlenbeck process. This smoothing effect is desirable from a risk-management perspective because it filters out temporary market noise and focuses on the underlying volatility trend.
Third, although the model tracks the average behavior of implied volatility reasonably well, it tends to underestimate extreme market movements. This limitation is particularly noticeable during episodes of sharp increases in the VIX index, where the model responds more slowly than the market. The discrepancy suggests that a pure diffusion volatility model may not be sufficient to explain sudden changes in market sentiment.

This observation motivates the incorporation of a L\'evy jump component in the proposed framework. By allowing simultaneous jumps in both asset returns and volatility, the model can potentially capture abrupt changes in market conditions more accurately than traditional diffusion-based stochastic volatility models. Consequently, the proposed L\'evy-Malliavin framework provides a richer and more realistic description of implied volatility dynamics than the classical Stein-Stein model while retaining analytical tractability.

Overall, Table 2 and Figure 1 demonstrate that the proposed framework is capable of reproducing the broad characteristics of market-implied volatility, including mean reversion and volatility clustering. At the same time, they highlight the importance of incorporating jump effects to better account for extreme market movements and volatility spikes observed in practice.

\section{Conclusion}
The theoretical developments presented in this paper contribute significantly to the advancement of option pricing within the Black-Scholes Model. By proposing a framework that incorporates market observations and implied volatility, this research marks a crucial step toward improving a model that has long remained stagnant. 
The classical formula for option pricing has made progress in the error between both modeled and observed market volatility in the BSM framework. By incorporating the jump term, the model captured the fat-tailed distribution of stock returns and the clustered volatility observed in real markets, resulting in an accurate representation of implied volatility.

Notably, through the utilization of Malliavin calculus, an exact expression for the at-the-money implied volatility of the European call option has been derived under the proposed model. This precise expression considers the dynamics of stock price and volatility, enabling the model to better capture the observed correlation between stock price jumps and volatility jumps in real-world market data.

\end{document}